\documentclass[12pt,a4paper]{article}
\usepackage[margin=2.75cm]{geometry}

\usepackage{latexsym}
\usepackage{amssymb,amsmath,amsthm}
\usepackage{nicefrac}
\usepackage{url}
\usepackage[hidelinks]{hyperref}

\usepackage[round]{natbib}
\bibliographystyle{abbrvnat}

\setlength{\parindent}{0pt}
\setlength{\parskip}{5pt plus 2pt minus 1pt}

\renewcommand{\paragraph}[1]{\textbf{#1}}


\title{\bf On the Difficulty of Measuring Divisiveness of Proposals under Ranked Preferences}

\author{Ulle Endriss \\[3pt] 
Institute for Logic, Language and Computation \\ University of Amsterdam}


\newtheorem{theorem}{Theorem}
\newtheorem{proposition}[theorem]{Proposition}

\newtheorem{definition}{Definition}
\newtheorem{axiom}{Axiom}

\theoremstyle{definition}

\newtheorem{examplex}{Example}
\newenvironment{example}
  {%
   \pushQED{\qed}\begin{examplex}}
  {\popQED\qedhere\end{examplex}}
 

\newcommand{\Profs}{{X!}^{N\subset\mathbb{N}}}
\newcommand{\pref}{\succ}
\newcommand{\pos}{\textit{pos}}
\renewcommand{\complement}[1]{\overline{#1}}

\DeclareMathOperator*{\argmax}{argmax}
\DeclareMathOperator*{\argmin}{argmin}

\renewcommand{\geq}{\geqslant}


\begin{document}
\maketitle

\begin{abstract}\noindent
Given the stated preferences of several people over a number of proposals regarding public policy initiatives, some of those proposals might be judged to be more ``divisive'' than others. When designing online participatory platforms to support digital democracy initiatives enabling citizens to deliberate over such proposals, we might wish to equip those platforms with the functionality to retrieve the most divisive proposals currently under discussion. Such a service would be useful for analysing the progress of deliberation and steering discussion towards issues that still require further debate. Guided by this use case, we explore possibilities for providing a clear definition of what it means to select a set of most divisive proposals on the basis of people's stated preferences over proposals. Then, employing the axiomatic method familiar from social choice theory, we show that the task of selecting the most divisive proposals in a manner that satisfies certain seemingly mild normative requirements faces a number of fundamental difficulties.
\end{abstract}

\section{Introduction}

The term \emph{digital democracy} refers to the idea of attempting to seize the opportunities provided by modern information technologies, and specifically the internet, to make democratic decision making more participatory and flexible \citep{HackerVanDijk2000,HelbingEtAL2023}. Yet, successful examples for realising this potential are still few and far between. Besides obvious challenges of usability, there are even more intricate and pressing challenges of legitimacy~\citep{Fung2015}. In a recent white paper, \citet{GrossiEtAl2024} argue that overcoming these challenges requires not just technological ingenuity but also scientific rigour, including in particular the careful development of formal and computational models of digitally-enhanced civic participation. 

In this spirit, we put forward a formal model for analysing the \emph{divisiveness} of a proposal under discussion on a digital democracy platform on the basis of the preferences reported by its users. While most people will have some intuitive understanding of what being ``divisive'' might mean, it is not at all obvious how one should map those intuitions into a mathematically precise definition. To appreciate the relevance of our objective of arriving at such a definition, imagine a platform where citizens can deliberate over, vote on, or otherwise express their preferences regarding a number of proposals concerning public policy. Depending on context, proposals might range from matters of national and supranational importance, such as reforming carbon taxing, to inherently local issues, such as the infamous pedestrianisation of Norwich city centre. Such a platform might offer the service of identifying proposals that are especially divisive in view of the preferences reported so far. This can be useful in a number of ways:
$(i)$~to select proposals on which to encourage further debate,
$(ii)$~to explain which issues prevent a population from reaching consensus~\citep{NavarreteEtAlNHB2024}, and
$(iii)$~to detect instances of ``political posturing'' by individuals excessively focusing on divisive rather than common-value issues~\citep{AshEtAlJP2017}.

Our objective of providing formal foundations for the process of extracting divisive proposals from the reported preferences of a given population in a principled manner is inspired by recent work by \citet{NavarreteEtAlNHB2024}. In two online experiments they ran during the French and Brazilian presidential election campaigns in 2022, they collected a large amount of data on citizen preferences regarding policy proposals featuring in the election manifestos of the political parties involved. Central to their analysis of the data collected is the idea that focusing on proposals that are particularly divisive (rather than, say, particularly popular) can lead to deep insights regarding the wishes and needs of citizens. To this end, \citeauthor{NavarreteEtAlNHB2024} propose one specific approach to measuring divisiveness. That approach is intuitively appealing and it enabled them to perform an insightful analysis of their data. But there is no reason to believe that it is the only valid (or the best) approach. 

To answer the question of what might be the \emph{right} way of measuring divisiveness we propose to turn to social choice theory~\citep{ArrowEtAlHBSCW2002}, the study of mechanisms for collective decision making. Especially computational social choice~\citep{HBCOMSOC2016}, with its focus on methods from Computer Science and AI, appears naturally suited to this task.

To make our objective of defining divisiveness concrete, we focus on functions that map any given profile of individual preferences (i.e., strict rankings of the available proposals) to a subset of the set of all proposals---to be interpreted as the most divisive proposals. Indeed, the service described earlier would need to implement precisely such a function. Formally, it is of the same ``type'' as a voting rule~\citep{ZwickerHBCOMSOC2016}, which also takes as input a preference profile and returns as output a set of the items the voters report preferences on, namely the set of election winners (due to ties there could be more than one). But our interpretation of these functions is very different. Clearly, the ``best'' proposals (returned by a voting rule) are not, usually, also the ``most divisive'' proposals. So we can use the \emph{methodology} of social choice theory to study these new functions, but we cannot rely on existing \emph{results} regarding voting rules.

\paragraph{Related work.}
As mentioned earlier, the idea of attempting to measure the divisiveness of proposals in view of a population reporting preferences over those proposals originates with \citet{NavarreteEtAlNHB2024}. 
In follow-up work, \citet{ColleyEtAlIJCAI2023} generalise the specific measure of divisiveness put forward by \citeauthor{NavarreteEtAlNHB2024} to a broader class of measures and initiate the study of some of the formal properties of such measures. Most relevant to our interests here is their analysis of how the measures they consider perform on specific ``extreme'' preference profiles, corresponding to populations that are completely polarised or in full agreement on certain issues.

The work of \citet{DelemazureEtAlIJCAI2024} is in a similar spirit. They design and analyse rules to select \emph{pairs of proposals} that, in some sense, maximise disagreement regarding the relative ranking of these proposals. 

We stress that here we are interested in divisiveness only \emph{as a function of reported preferences}. This is different, for instance, from the work of \citet{AshEtAlJP2017}, who in a study of US congressional speech records label proposals as ``divisive'' when they include tuples of words such as $\{\textit{right},\textit{bear},\textit{arm}\}$ or $\{\textit{tax},\textit{break},\textit{wealthy}\}$. 

While the idea of measuring the divisiveness of individual proposals within a preference profile is fairly new, there is a somewhat more established (but still not at all conclusive) literature on the closely related question of how to measure levels of \emph{polarisation} or \emph{diversity} of a given preference profile in its entirety. This is relevant to our problem, as we should expect to find a positive correlation between the number of highly divisive proposals on the one hand and the degree of polarisation or diversity in a population on the other. The literature on polarisation and diversity spans multiple disciplines, and formal work of the kind we set out to do here can be found in both Economics and AI. 
For instance, \citet{AlcaldeUnzuVorsatzSCW2013} axiomatise several measures of \emph{cohesiveness} for preference profiles---a notion that in some sense is dual to polarisation and diversity, while \citet{CanEtAlMSS2015} directly axiomatise a measure of polarisation of preference profiles.
\citet{HashemiEndrissECAI2014} define the problem of measuring the diversity of preference profiles, suggest several indices for doing so as well as axioms one might want to use to analyse those indices, and then show experimentally how the performance of different voting rules can be affected by the diversity of the preference profile at hand.
Finally, \citet{FaliszewskiEtAlIJCAI2023} critically examine a number of earlier proposals for measuring diversity, polarisation, and similar notions and put forward terminology to adequately describe such phenomena. They do so with a view towards running experimental studies in computational social choice---where we need good terminology to describe the data and want that data to be representative.

In methodological terms, we will largely rely on the \emph{axiomatic method} of social choice theory and related fields~\citep{ArrowEtAlHBSCW2002,HBCOMSOC2016,Thomson2023}.

\paragraph{Contribution.}
We put forward a formal model of divisiveness by introducing the concept of \emph{divisiveness selection function}, mapping any given profile of ranked preferences over proposals to the set of those proposals that are most divisive in view of that profile.
We then suggest three approaches for designing such functions, each of which seeks to base the definition of divisiveness on that of a related concept for which convincing formalisations are already available in the literature:
$(i)$ \emph{scoring functions}, $(ii)$~\emph{voting rules} (or \emph{social choice functions}), and $(iii)$~measures for profile properties such as \emph{polarisation} or \emph{diversity}. 
The first approach generalises but still is closely inspired by the work of \citet{NavarreteEtAlNHB2024} and \citet{ColleyEtAlIJCAI2023}; the other two approaches are new.
Finally, we formulate several normative desiderata for divisiveness selection functions, we illustrate how to use them to help us identify a suitable notion of divisiveness for a given application, and we prove that certain desiderata are inherently incompatible with one another.
Together, our results illustrate the difficulty of finding a mathematically precise and normatively adequate definition of ``divisiveness''.

\paragraph{Paper overview.}
We present our model of divisiveness selection functions in Section~\ref{sec:model}, and our three approaches for defining such functions in Section~\ref{sec:classes}. We then propose a number of axioms (i.e., normative desiderata) for such functions in Section~\ref{sec:axioms}, and illustrate how they can help us orient ourselves within the huge space of possible definitions of divisiveness. We finally show, in Section~\ref{sec:impossibilities}, how certain combinations of desiderata lead to impossibility results.

\section{The Model}\label{sec:model}

In this section we put forward our formal model of divisiveness. The central concept we introduce is that of a \emph{divisiveness selection function}, mapping the preferences over policy proposals reported by the members of a given population to the set of those proposals that should be deemed most divisive in view of these preferences. Within this model it will then be possible to define and analyse many different concrete proposals for how to measure divisiveness.

\paragraph{Notation and terminology.}
Let $X$ be a finite set of \emph{proposals}, and let $m=|X|$ be the number of those proposals. Suppose a finite number of \emph{agents} express their \emph{preferences} by each reporting a ranking of the proposals in $X$. Let ${X!}$ be the set of all strict linear orders on~$X$, i.e., the set of all possible preferences. We represent agents as natural numbers $i\in\mathbb{N}$, so the \emph{electorate} (or \emph{population}) reporting preferences in any given situation is a finite nonempty set $N \subset \mathbb{N}$. When each agent~$i$ belonging to some population $N$ reports a preference, we obtain a \emph{profile} of preferences. 
We model profiles as functions $R : N \to {X!}$, mapping each agent to her preference. For notational convenience, we often write $R_i$ for the preference $R(i)$ of agent~$i$ in profile~$R$. We denote the set of all conceivable profiles as $\Profs$ (with $N$ always understood to be nonempty and finite).

\paragraph{Selecting maximally divisive proposals.}
Given the information contained in a profile~$R$, we would like to identify the most divisive proposals in~$X$ (possibly just one). Our main objective here is to make this notion of being ``most divisive'' mathematically precise. So we are looking for a mathematical object of the following kind.

\begin{definition}
A \textbf{divisiveness selection function} (DSF) is a function $\Delta : \Profs \to 2^X \setminus \{\emptyset\}$, mapping any given profile of preferences over proposals to a nonempty set of proposals---to be interpreted as the set of most divisive proposals given those preferences. 
\end{definition}

\noindent
Observe that---as a mathematical object---a DSF is of the same type as a voting rule \citep{ArrowEtAlHBSCW2002,ZwickerHBCOMSOC2016}: both take a profile of preferences as input and return a nonempty set of alternatives as output. But the intended semantics of these sets is very different, so we should not assume that any function of this type that makes for a good voting rule would be of any interest as a DSF.

\noindent\textbf{Additional notation.}
Let us fix some further pieces of notation. 
First, we write $x\pref^R_i y$ in case agent~$i$ ranks proposal~$x$ above proposal~$y$ in profile~$R$. 
Second, we use $\pos^R_i(x)$ to refer to the \emph{position} of proposal~$x$ in the ranking reported by agent~$i$ in profile~$R$ (with the convention that the top position is referred to as position~$1$):
\begin{eqnarray*}
\pos_i^R(x) & = & 1 + \#\{\, y\in X \mid y \pref^R_i x \,\}
\end{eqnarray*}
The next three pieces of notation all concern sets $C\subseteq N$ (for some given electorate~$N$) we might think of as \emph{coalitions}, or \emph{subelectorates}, or \emph{subpopulations}.
First, for any coalition~$C$ we use $\smash{\complement{C}}$ to refer to its complement $N\setminus C$, when the relevant electorate~$N$ is clear from context.
Second, $N^R_{x\pref y}$ denotes the coalition of agents who rank $x$ above $y$ in profile~$R$.
Third, $R{\restriction_C}$ is the restriction of profile $R : N \to {X!}$ to coalition $C\subseteq N$; i.e., $R{\restriction_C}(i)=R(i)$ for each $i\in C$.
For profiles $R : N \to {X!}$ and $R' : N' \to {X!}$ with disjoint electorates ($N\cap N'=\emptyset$), we use $R\oplus R'$ to refer to the \emph{union} of $R$ and $R'$ (i.e., the function behaving like $R$ on $N$ and like $R'$ on~$N'$).

\paragraph{Examples.}
Let us now review two examples for how one might want to define a DSF. 
A first idea is to extract for each proposal the list of positions it occurs in within the profile at hand; to think of each such list as having been generated by a random variable; and to then equate the divisiveness of a proposal with the variance of the corresponding list of positions. \citet{ColleyEtAlIJCAI2023} call this approach to measuring divisiveness \emph{rank variance} and trace its origins back to work on measuring polarisation in social networks~\citep{MuscoEtAlWWW2018}. 
While \citeauthor{ColleyEtAlIJCAI2023} define what we earlier called an index function, here we turn that definition into a definition of a~DSF.

\begin{definition}\label{def:rankvar}
Under the \textbf{rank-variance DSF} $\Delta_{\textit{var}}$ the proposals deemed most divisive are those with the highest variance across the positions at which they occur in a given profile $R : N \to {X!}$:
\begin{eqnarray*}
\Delta_{\textit{var}}(R) & = & \argmax_{x\in X} \frac{1}{|N|} \cdot \sum_{i\in N} (\pos^R_i(x) - \mu^R(x))^2 \\
&& \text{where}\
\mu^R(x) = \frac{1}{|N|}\cdot\sum_{i\in N} \pos^R_i(x)
\end{eqnarray*}
\end{definition}

\noindent
We stress that \citet{ColleyEtAlIJCAI2023} do not specifically advocate this as a suitable definition for divisiveness. Indeed, while it seems natural to assume that greater disagreement about the positioning of a proposal contributes to greater divisiveness of that proposal, it is at best unclear whether the manner in which we measure the variance of a random variable also happens to be appropriate for measuring divisiveness. We are going to return to this issue in Section~\ref{sec:axioms}.

Inspired by seminal work by \citet{EstebanRayEcon1994} on polarisation,
\citet{NavarreteEtAlNHB2024} make an alternative---and rather intriguing---proposal for how to measure divisiveness. They argue that a proposal is divisive if two complementary subpopulations are likely to disagree on its merits. To operationalise this idea, they make specific assumptions about what the relevant subpopulations are and how we should measure disagreement. 
For this measurement, they use a scoring function, mapping each proposal~$x$ in a profile~$R$ to a real-valued score. Let us look at an example.
Recall that the \emph{Borda score} of $x$ in profile~$R$ can be computed by summing, across all preferences~$R_i$ occurring in~$R$, the number of proposals dispreferred to $x$ in $R_i$ \citep{ZwickerHBCOMSOC2016}. The \emph{normalised Borda score} then is the Borda score divided by the number of preferences in~$R$. Normalisation is useful if we need to compare scores across profiles of different size. 
We are now ready to adapt the definition of \citeauthor{NavarreteEtAlNHB2024} to our setting.

\begin{definition}\label{def:navarrete}
The \textbf{DSF of \citeauthor{NavarreteEtAlNHB2024}} $\Delta_{\textit{nav}}^s$ induced by scoring function~$s$ (mapping proposals in profiles to reals) is defined as: 
\begin{eqnarray*}
\Delta_{\textit{nav}}^s(R) & = & \argmax_{x\in X} \frac{1}{m-1} \cdot \sum_{y\not=x} \mathit{div}(R,x,y), 
\end{eqnarray*}
where $\mathit{div}(R,x,y) = |s(R{\restriction_{N^R_{x\pref y}}},x) - s(R{\restriction_{N^R_{y\pref x}}},x)|$, unless $N^R_{x\pref y}$ or its complement are empty, in which case $\mathit{div}(R,x,y) = 0$.
\end{definition}

\noindent
As here we are only interested in maxima, the normalisation factor $\nicefrac{1}{m-1}$ is technically redundant.
For their data analysis, \citeauthor{NavarreteEtAlNHB2024} focus on the case where $s$ returns the normalised Borda score of a given proposal in a given (partial) profile.
Thus, to determine the divisiveness of $x$ in $R$, for every proposal $y$ other than $x$, we split the electorate into two subelectorates, those that rank $x$ above $y$ and those that do not, and then compute the difference between the normalised Borda scores of $x$ for those two subelectorates. The divisiveness of $x$ then is the average such difference.\footnote{Recall that we do not have access to demographic information regarding participants. If we did, we could also identify proposals that are maximally divisive across, say, the urban/rural divide or the old/young divide. Here this is not possible. We only have access to the reported preferences and we must define divisiveness in those terms.}
\citeauthor{NavarreteEtAlNHB2024} demonstrate that, on their datasets of crowdsourced preferences regarding policy proposals extracted from party manifestos in France and Brazil, this definition of divisiveness can support an insightful analysis of complex preference data in practice.

\paragraph{Why selection?}
We conclude this section with a brief remark on methodology.
Framing the problem of defining divisiveness as one of defining a \emph{selection function} returning the most divisive proposals is but one of multiple natural approaches one could take. Another approach would be to define an \emph{index function} that maps each proposal to a cardinal value reflecting its degree of divisiveness. This is the approach followed by \citet{NavarreteEtAlNHB2024} and \citet{ColleyEtAlIJCAI2023}. 
It is very helpful when presenting statistics and visualisations, but it can be hard to justify the choice of a very specific cardinal divisiveness value for a given proposal. Yet another approach would be to look for a \emph{ranking function} that returns a (strict or weak) ranking of the proposals, but without attaching cardinal values to proposals.  
Here we focus on selecting maximally divisive proposals rather than ranking proposals in terms of their divisiveness or indexing them with degrees of divisiveness, because this is the most fundamental problem of the three. Indeed, a satisfactory solution to either one of the other two problems will immediately yield also a solution to the selection problem we focus on here. Thus, any negative results we might obtain for the selection problem will immediately carry over to both the ranking and the indexing problem as well.

\section{Classes of Divisiveness Selection Functions}\label{sec:classes}

So far we have seen two concrete proposals for the definition of a DSF. To map out the---enormous---terrain of possible definitions, in this section we propose three ``recipes'' for defining a DSF, each of which gives rise to a class of functions one might want to use.

\paragraph{Score-based definitions.}
\citet{ColleyEtAlIJCAI2023} generalise the proposal of \citet{NavarreteEtAlNHB2024} (see Definition~\ref{def:navarrete}) to a family of functions. We now generalise further. The idea is that we imagine that the full electorate $N$ will split into two complementary coalitions $C$ and~$\smash{\complement{C}}$, and each possible split will occur with a certain probability. To determine the divisiveness of $x$ we calculate the \emph{expected difference in score} in $C$ and $\smash{\complement{C}}$, under a scoring function of our choice.

Let $s$ be a scoring function mapping pairs of a profile restriction $R{\restriction_C}$ and a proposal~$x$ to the reals. (The aforementioned normalised Borda score is a good example.) We define the \emph{$s$-divisiveness} of proposal~$x$ in profile $R : N \to {X!}$ relative to the decomposition $\{C,\complement{C}\}$ of $N$ into two complementary subelectorates as follows:
\begin{eqnarray*}
\mathit{div}_s(R,x,C,\complement{C}) & \!\!=\!\! & \left\{
\begin{array}{ll}
\!|s(R{\restriction_C},x) - s(R{\restriction_{\complement{C}}},x)| & \text{if}\ C,\complement{C} \not= \emptyset \\
\!0 & \text{otherwise}
\end{array}\right.
\end{eqnarray*}
That is, we compute the difference in score for $x$ for the two subprofiles (but return~$0$ in case one of the subelectorates is empty, i.e., in case there is no well-defined score).

For any profile $R : N \to {X!}$ and proposal~$x$, let $p_{R,x}$ be a probability distribution on decompositions $\{C,\complement{C}\}$ of $N$ into complementary subelectorates. (The uniform probability distribution is a good example.) We now obtain a definition of divisiveness of $x$ if we imagine we are sampling possible decompositions using $p_{R,x}$ and then examine the divisiveness of $x$ relative to those decompositions.

\begin{definition}\label{def:scoreDSF}
The \textbf{score-based DSF} $\Delta^p_s$ induced by scoring function~$s$ and array of probability distributions~$p$ is defined as follows:
\begin{eqnarray*}
\Delta^p_s(R) & \!=\! & \argmax_{x\in X} \sum_{\{C,\complement{C}\}} p_{R,x}(\{C,\complement{C}\}) \cdot \mathit{div}_s(R,x,C,\complement{C})
\end{eqnarray*}
\end{definition}

\noindent
We stress that this definition is very general, and we may not always need or want this level of generality. For instance, it seems reasonable to assume that the probability $p_{R,x}(\{C,\complement{C}\})$ of sampling $\{C,\complement{C}\}$ does in fact not depend on either $R$ or $x$ (e.g., when every possible split is equally likely to occur). But note that the dependency of $p$ on both $R$ and $x$ is required to be able to cover Definition~\ref{def:navarrete}.\footnote{The probability distributions $p_{R,x}$ that allow us to obtain Definition~\ref{def:navarrete} as a special case of Definition~\ref{def:scoreDSF} are such that \smash{$p_{R,x}(\{C,\complement{C}\}) = 0$} unless there exists a $y$ such that $C$ or \smash{$\complement{C}$} coincide with \smash{$N^R_{x\pref y}$}. When considered from this angle, the specific choices made for Definition~\ref{def:navarrete} may seem somewhat arbitrary. (Why should we only sample these specific coalitions?) But there is an important practical advantage to this design choice: the number of coalitions now is polynomial rather than exponential.} 

Now, why should we choose one scoring function~$s$ over another? Natural candidates for $s$ are closely linked to well-studied voting rules. For instance, the (normalised) Borda score defines the \emph{Borda rule}~\citep{ZwickerHBCOMSOC2016}, which is widely used in practice, e.g., to elect officials for a range of professional organisations, and which comes with strong normative foundations, as it has been characterised as being the only voting rule that satisfies certain combinations of normatively appealing axioms~\citep{ChebotarevShamisAOR1998,YoungJET1974}. It is then tempting to think that we might be able to use existing axiomatic results for a voting rule induced by $s$ to also justify the use of $\Delta^p_s$ when analysing divisiveness. But this would be a fallacy, as the following example~shows. 

\begin{example}\label{ex:fallacy}
The \emph{positional scoring rules}~\citep{ZwickerHBCOMSOC2016} are voting rules that are induced by a scoring vector $(s_1,\ldots,s_{m}) \in \mathbb{R}^m$ by stipulating that a proposal garners $s_k$ points whenever an agent ranks it at position~$k$. For instance, for $m=4$ the Borda rule is induced by the scoring vector $(3,2,1,0)$. 
But for any fixed electorate size $|N|$ and any sufficiently small $\epsilon>0$ (with $|N|\cdot\epsilon < 1$), the Borda rule is also induced by $(3,2,1,\epsilon)$. Now consider this profile:
\[\begin{array}{c}
a\pref_1 b\pref_1 c\pref_1 d \\
b\pref_2 a\pref_2 c\pref_2 d \\
a\pref_3 b\pref_3 d\pref_3 c \\
b\pref_4 a\pref_4 d\pref_4 c 
\end{array}\]
For the DSF of Definition~\ref{def:navarrete}, which is an instance of the class of DSFs of Definition~\ref{def:scoreDSF}, when scoring according to $(3,2,1,0)$ we find all four proposals to be maximally divisive, while when scoring according to $(3,2,1,\epsilon)$, only $a$ and $b$ are maximally divisive.

Thus, as both of these scoring vectors induce the Borda rule, normative justifications of the Borda rule cannot possibly serve as a justification for preferring Definition~\ref{def:navarrete} with Borda scoring over its variant where we score proposals in the lowest position with~$\epsilon$ rather than~$0$.
\end{example}

\noindent
In a nutshell, we have argued that $(i)$~scoring function~$s$ inducing voting rule~$F$ and $(ii)$~the availability of a convincing normative argument for the use of $F$ together do not, in and of themselves, provide us with a convincing normative argument for using a score-based DSF $\Delta^p_s$ induced by $s$. This is so, because $s$ might not \emph{uniquely} induce $F$; instead, there typically are many other scoring functions~$s'$ that also induce $F$ but that would induce a different~DSF. 
Of course, this does not invalidate the approach of defining a DSF in terms of a scoring function, but it means that we have to justify the choice of scoring function from scratch.

\paragraph{SCF-based definitions.}
Our next class of DSFs is designed to address this concern. Rather than defining a DSF in terms of a scoring function we might try to define it directly in terms of a voting rule. Recall that a \emph{voting rule}, also known as a \emph{social choice function} (SCF), is a function $F$ mapping profiles (of any size) to a nonempty set of proposals---namely those that are, in some sense, the collectively most preferred proposals. A host of voting rules have been proposed and studied in the literature~\citep{ZwickerHBCOMSOC2016}; the Borda rule is one of many examples.

Given a manner of sampling pairs of complementary coalitions and a voting rule~$F$, we now want to model divisiveness of $x$ as the probability of encountering two coalitions that, under $F$, disagree on the election of~$x$. We shall require some further notation:\footnote{In case one of 
$C$ and $\smash{\complement{C}}$ should be empty, 
we define \smash{$\mathit{div}_F(R,x,C,\complement{C})$} as $0$.}
%
%
\begin{eqnarray*}
\mathit{div}_F(R,x,C,\complement{C}) & = & \left|\, \frac{[x\in F(R{\restriction_C})]}{|F(R{\restriction_C})|} -  \frac{[x\in F(R{\restriction_{\complement{C}}})]}{|F(R{\restriction_{\complement{C}}})|} \,\right|
\end{eqnarray*}
Here $[\cdot]$ is the Iverson bracket. Note that an expression of the form $\nicefrac{[x\in S]}{|S|}$ is equal  $\nicefrac{1}{|S|}$ if $x\in S$ and equal to $0$ otherwise.
So in case $F$ happens to return singletons for the relevant profiles, $\mathit{div}_F(R,x,C,\complement{C})$ reduces to $1$ when $F$ returns distinct winners for the two subprofiles and to $0$ otherwise. In the general case, where $F$ might return a larger set, we can think of $\mathit{div}_F(R,x,C,\complement{C})$ as the difference between the two probabilities of the two subelectorates to elect $x$ under $F$ in combination with a uniform tie-breaking rule.
This new notion of divisiveness of $x$ relative to two given subprofiles now gives rise to the definition of a new class of DSFs.

\begin{definition}\label{def:scfDSF}
The \textbf{SCF-based DSF} $\Delta_F^p$ induced by SCF $F$ and array of probability distributions~$p$ is defined as follows:
\begin{eqnarray*}
\Delta^p_F(R) & = & \argmax_{x\in X} \sum_{\{C,\complement{C}\}} p_{R,x}(\{C,\complement{C}\}) \cdot \mathit{div}_F(R,x,C,\complement{C})
\end{eqnarray*}
\end{definition}

\noindent
This definition can be instantiated with any SCF~$F$. 
It naturally captures the idea of divisiveness as the probability of two complementary subelectorates electing a different most preferred proposal under a given voting rule. Importantly though, as the following example demonstrates,\footnote{This example was suggested to me by an anonymous reviewer of an earlier version of the paper.} this can result in counterintuitive choices for certain profiles.

\begin{example}\label{ex:scf-unanimous-indecisive}
Consider any profile in which half of the agents report the first preference and the other half report the second one below:
\[\begin{array}{c}
a\pref b\pref c \\
a\pref c\pref b 
\end{array}\]
Intuitively, $a$ is least divisive and $b$ and $c$ are equally divisive. Yet, most SCF-based DSFs will return the full set $\{a,b,c\}$, thereby declaring all three proposals equally divisive.
To see this, observe that, for any subelectorate, $a$ will always be the unanimously top-ranked proposal and thus would be selected by any SCF that respects such unanimous preferences (which includes most of the commonly studied voting rules).
\end{example}

\noindent
We note that, in a technical sense, Definition~\ref{def:scfDSF} is a special case of Definition~\ref{def:scoreDSF}, because every SCF $F$ can also be thought of as a scoring function $s$, namely the one awarding a score of $\nicefrac{1}{k}$ for a win as part of a group of winners of size~$k$ (and a score of $0$ otherwise).\footnote{Interestingly, if we were to generalise Definition~\ref{def:scfDSF} to allow for \emph{probabilistic} SCFs~\citep{BrandtTRENDS2017}, returning for any given profile of preferences a lottery over alternatives, then that more general definition would again encompass Definition~\ref{def:scoreDSF}, as we can think of scores as (being proportional to) probabilities for getting selected in such a lottery. We shall not pursue this path any further here but note that it might offer a principled way of dealing with the problem highlighted by Example~\ref{ex:fallacy} without giving up on the modelling flexibility provided by the score-based approach to defining a DSF.}
When comparing Definition~\ref{def:scoreDSF} and Definition~\ref{def:scfDSF}, the latter seems more ``principled'' (in the sense of not suffering from the issue identified in Example~\ref{ex:fallacy}), while the former has the advantage of being more ``fine-grained'' (thereby making it easier to avoid the issue raised in Example~\ref{ex:scf-unanimous-indecisive}).

\paragraph{Definitions based on profile indices.}
In the introductory section, we already mentioned some of the parallels between the task of defining the divisiveness of a proposal \emph{within} a profile and the task of classifying an entire profile as being, for instance, \emph{diverse} or \emph{polarised}. There is a growing literature on formalising such concepts using the toolkit of social choice theory \citep{AlcaldeUnzuVorsatzSCW2013,CanEtAlMSS2015,HashemiEndrissECAI2014,FaliszewskiEtAlIJCAI2023}. Given the seemingly obvious connections, can we refashion some of these concepts to arrive at an attractive definition for a DSF?

Let $\delta : \Profs \to \mathbb{R}$ be an index function mapping any given profile $R$ to a real number, which we can think of as reflecting the degree of diversity, polarisation, or disagreement in~$R$. All of the measures discussed in the aforementioned literature are of this form~\citep{AlcaldeUnzuVorsatzSCW2013,CanEtAlMSS2015,HashemiEndrissECAI2014,FaliszewskiEtAlIJCAI2023}. 
A good example, also showing up as an example in all of the cited works, is the function~$\delta$ mapping any given profile~$R$ to the average Kendall tau distance across all pairs of preferences in~$R$.

The fundamental idea motivating the definition of our next class of DSFs is that removing a fairly divisive proposal~$x$ from a profile~$R$ should result in a new profile $R'$ of fairly low diversity/polarisation/disagreement $\delta(R')$. 
A slight reformulation of this idea is this: Moving a fairly divisive proposal~$x$ to the top of everyone's preference in a profile~$R$ should result in a new profile $R'$ with fairly low $\delta(R')$. In the interest of space, we shall formalise and analyse only this second variant of the idea, but the reader should expect findings to turn out very similar for the first variant.\footnote{The same is true if we were to move $x$ to the bottom of everyone's preference (or to some other fixed position) before applying~$\delta$.}

\begin{definition}\label{def:deltaDSF}
The \textbf{profile-index-based DSF} $\Delta_\delta$ induced by profile index function $\delta : \Profs \to \mathbb{R}$ is defined as follows:
\begin{eqnarray*}
\Delta_\delta(R) & = & \argmin_{x\in X} \delta(R^{\hat{x}}),
\end{eqnarray*}
where $R^{\hat{x}}$ is the profile we obtain if we move proposal~$x$ to the top of every agent's preference ranking in profile~$R$.
\end{definition}

\noindent
Again, the idea is that in $R^{\hat{x}}$ proposal~$x$ will not in any way contribute to diversity, polarisation, or disagreement. So the lower $\delta(R^{\hat{x}})$, the more divisive $x$ must have been in the original profile~$R$.

\section{Normative Requirements}\label{sec:axioms}

At this point we have seen a number of different ideas for how to define a DSF, highlighting the size of the design space available when choosing a concrete DSF for a concrete application. What we still lack is guidance for how to choose. In social choice theory, the time-honoured approach to help the designer choose in a situation like this is the axiomatic method~\citep{ArrowEtAlHBSCW2002,Thomson2023}. So in this section we formulate a number of axioms for DSFs: formal properties of a certain normative appeal that any given DSF might satisfy or violate, and that thus can serve as criteria for preferring one DSF over another. While we identify a number of axiomatic principles that can meaningfully guide the search for an adequate DSF, it is important to stress that---here and in any other work using the axiomatic method---an axiom should never be misunderstood as an absolute requirement.

\paragraph{Basic axioms.}
We start with two standard axioms used throughout all of social choice theory that also are relevant here.
First, we usually would not want the results returned by a DSF depend on the identities of the agents reporting preferences. This property is commonly known as \emph{Anonymity}.

\begin{axiom}
We say that a DSF $\Delta$ satisfies \textbf{Anonymity} in case $\Delta(R)=\Delta(R\circ\sigma)$ for any profile $R : N \to {X!}$ and bijection $\sigma : \mathbb{N} \to \mathbb{N}$.
\end{axiom}

\noindent
Here the function composition $R\circ\sigma$ is the profile we obtain when we let agents in~$R$ swap their preferences according to~$\sigma$ (amongst themselves or with agents outside of~$N$).

The familiar counterpart to Anonymity is \emph{Neutrality}, encoding a dual symmetry requirement with respect to proposals. 
It says that, if we change the proposals occurring in a profile, then we must change the proposals in the resulting set of most divisive proposals accordingly. We encode this axiom with the help of permutations $\sigma : X \to X$ on proposals. For the purposes of our formal definition, we extend any such $\sigma$ from elements of $X$ to both subsets of $X$ and functions from some $N$ to linear orders on $X$ in the natural manner.

\begin{axiom}
We say that a DSF $\Delta$ satisfies \textbf{Neutrality} in case $\Delta(\sigma(R)) = \sigma(\Delta(R))$ for any profile $R$ and permutation $\sigma : X \to X$.
\end{axiom}

\noindent
\citet{ColleyEtAlIJCAI2023}---at least implicitly---propose a further basic axiom, which we shall call \emph{Uniformity}, by postulating that, under any reasonable measure of divisiveness, in a \emph{perfectly uniform profile}, where each of the $m!$ possible preference rankings occurs equally often, we must declare all proposals equally divisive. 

\begin{axiom}
We say that a DSF $\Delta$ satisfies \textbf{Uniformity} in case $\Delta(R^U)  = X$ for any perfectly uniform profile~$R^U$.
\end{axiom}

\noindent
\citet{ColleyEtAlIJCAI2023} point out that $\Delta_{\textit{nav}}^s$ (of Definition~\ref{def:navarrete}) and several of its variants satisfy Uniformity. 
In fact, as we shall see next, Uniformity is satisfied whenever both Anonymity and Neutrality are satisfied.

\begin{proposition}
Every DSF that satisfies both Anonymity and Neutrality must also satisfy Uniformity.
\end{proposition}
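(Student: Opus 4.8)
The plan is to exploit the symmetry of a perfectly uniform profile. The key observation is that a perfectly uniform profile $R^U$ is highly symmetric with respect to proposals: since every one of the $m!$ rankings occurs equally often, permuting the proposals by any $\sigma : X \to X$ leaves the multiset of rankings unchanged, so $\sigma(R^U)$ is the same profile as $R^U$ up to a relabelling of which agent holds which ranking. Under Anonymity this relabelling is immaterial, so I expect to show that $\Delta(\sigma(R^U)) = \Delta(R^U)$ for every permutation $\sigma$ on proposals. Combining this with Neutrality, which gives $\Delta(\sigma(R^U)) = \sigma(\Delta(R^U))$, yields $\sigma(\Delta(R^U)) = \Delta(R^U)$ for all $\sigma$.

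First I would make precise the claim that $\sigma(R^U)$ and $R^U$ are related by an agent-permutation. For a fixed permutation $\sigma$ on $X$, the profile $\sigma(R^U)$ assigns to each agent $i$ the ranking $\sigma(R^U_i)$. Because $R^U$ contains each ranking exactly $|N|/m!$ times and $\sigma$ merely permutes the set of all rankings among themselves (it is a bijection on ${X!}$), the profile $\sigma(R^U)$ also contains each ranking exactly $|N|/m!$ times. Hence $\sigma(R^U)$ and $R^U$ have the same electorate and the same multiset of preferences, so there is a bijection $\tau$ of the electorate (extendable to a bijection on all of $\mathbb{N}$) with $\sigma(R^U) = R^U \circ \tau$. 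Anonymity then gives $\Delta(\sigma(R^U)) = \Delta(R^U \circ \tau) = \Delta(R^U)$.

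The final step is to conclude that $\Delta(R^U) = X$. From the two displayed equalities we have $\sigma(\Delta(R^U)) = \Delta(R^U)$ for every permutation $\sigma$ on $X$; that is, the set $\Delta(R^U)$ is invariant under the full symmetric group acting on $X$. A nonempty subset of $X$ that is fixed by every permutation of $X$ must be all of $X$: if it omitted some proposal while containing another, a transposition swapping the two would move it. Since a DSF always returns a nonempty set, we conclude $\Delta(R^U) = X$, which is exactly Uniformity.

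The main obstacle I anticipate is the bookkeeping in the first step: one must argue carefully that applying $\sigma$ to every agent's ranking genuinely produces a permutation of the multiset of rankings (so that the counts are preserved) and then package this as an honest agent-bijection $\tau$ to which Anonymity applies. This requires being a little careful about how the extension of $\sigma$ to rankings and the definition of $\sigma(R)$ interact, and about the fact that Anonymity as stated quantifies over bijections of all of $\mathbb{N}$ rather than just of $N$. Once that correspondence is set up cleanly, the rest is a short symmetry argument.
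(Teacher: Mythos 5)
Your proof is correct and follows essentially the same route as the paper's: both rest on the key observation that in a perfectly uniform profile any proposal-permutation can be realised as an agent-permutation (i.e., $\sigma(R^U) = R^U \circ \tau$ for a suitable bijection $\tau$ on $\mathbb{N}$ acting as a permutation on $N$), and then chain Neutrality with Anonymity. The only cosmetic difference is that the paper works with a single transposition swapping a fixed $a \in \Delta(R^U)$ with an arbitrary $b$, whereas you quantify over all permutations and conclude via invariance of $\Delta(R^U)$ under the full symmetric group---the same argument in slightly more general form.
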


\begin{proof}
Let $\Delta$ be a DSF that is anonymous and neutral, and
consider any perfectly uniform profile $R^U : N \to {X!}$. 
By definition, $\Delta(R^U)$ cannot be empty, so w.l.o.g., suppose $a\in\Delta(R^U)$.
We are done if we can show, for an arbitrary proposal $b\in X$, that also $b\in\Delta(R^U)$.

Let $\sigma_X : X \to X$ be the permutation on proposals that swaps $a$ and $b$ (and nothing else). 
As $R^U$ is perfectly uniform, there must be a bijection $\sigma_N : \mathbb{N} \to \mathbb{N}$ defined on all agents of the universe~$\mathbb{N}$ that acts as a permutation on~$N$ such that $\sigma_X(R^U) = R^U \circ \sigma_N$. (In words: any profile we can reach by swapping two proposals we can also reach by permuting some agents.) By Neutrality, we get $b\in\Delta(\sigma_X(R^U))$ and thus $b\in \Delta(R^U\circ\sigma_N)$. So by Anonymity, we get $b\in \Delta(R^U)$, meaning we are done.
\end{proof}

\noindent
Thus, as essentially all intuitively reasonable choices for a DSF will satisfy both Anonymity and Neutrality,\footnote{In particular, any DSF~$\Delta$ belonging to one of the three classes defined in Section~\ref{sec:classes} will satisfy Anonymity and Neutrality as long as the parameters ($p$, $s$, $F$, $\delta$) inducing $\Delta$ satisfy the corresponding symmetry requirements.} we do not need to consider the axiom of Uniformity as a selection criterion any further.

\paragraph{Unanimity axioms.}
Another common staple of social choice theory are unanimity axioms. In the context of designing voting rules, such an axiom would say that unanimously held preferences should be respected. For us, instead, it means that unanimously held preferences should not be taken as contributing to divisiveness. We start with the most basic instantiation of this fundamental principle.

We say that a profile $R : N \to {X!}$ is \emph{unanimous} in case it maps every agent to the same fixed ranking over proposals (meaning that $R_i=R_j$ for all $i,j\in N$). In this situation, none of the proposals is particularly divisive. So any reasonable DSF should declare all proposals in $X$ as being equally and thus also ``most'' divisive. 

\begin{axiom}\label{ax:profile-unanimity}
We say that a DSF $\Delta$ satisfies \textbf{Profile Unanimity} in case $\Delta(R)=X$ for any unanimous profile~$R$.
\end{axiom}

\noindent
This also is a very weak axiom that any reasonable DSF will satisfy. So let us now turn to more demanding unanimity principles.

Recall that we say that proposal~$x$ occurs in position~$k$ in preference $R_i$ in case the number of proposals ranked strictly above~$x$ by $R_i$ is exactly $k-1$. Arguably, if a given proposal $x$ occurs in the same position throughout a profile, it is minimally divisive---or at the very least not most divisive. Indeed, for any such proposal~$x$ there is unanimous agreement on the rank of $x$ within the space of all proposals, which is not something one would expect to see for a divisive proposal.\footnote{But also note that unanimous agreement on the absolute rank of $x$ is not the same thing as unanimous agreement on how $x$ should ranked relative to each of the other proposals. The latter condition is also of interest and could be instrumentalised to define a weaker unanimity axiom.} Let us turn this idea into an axiom.

\begin{axiom}\label{ax:pos-unanimity}
We say that a DSF $\Delta$ satisfies \textbf{Position Unanimity} in case $x \notin \Delta(R)$ for any profile~$R$ where proposal~$x$ occurs in the same position throughout (unless $R$ is a unanimous profile).
\end{axiom}


\noindent
The exception accounting for unanimous profiles ensures we do not violate Axiom~\ref{ax:profile-unanimity}.
The following weakening of Axiom~\ref{ax:pos-unanimity} relaxes the need to rank fixed-position proposals \emph{strictly} above the rest.
   
\begin{axiom}\label{ax:weak pos-unanimity}
We say that a DSF $\Delta$ satisfies \textbf{Weak Position Unanimity} in case $x \notin \Delta(R)$ or $\Delta(R) = X$ for any profile~$R$ where proposal~$x$ occurs in the same position throughout.
\end{axiom}

\noindent
While \citet{ColleyEtAlIJCAI2023} do not discuss (Weak) Position Unanimity \emph{as an axiom}, they clearly share the intuition that upholding this principle is central to the notion of divisiveness. One of their results can be interpreted as showing that $\Delta_{\textit{nav}}^s$ with normalised Borda scoring~$s$ satisfies Weak Position Unanimity, which they list as an argument in favour of using it. 
We now generalise their result to a large class of score-based DSFs.
Let us call a scoring function~$s$ \emph{positional} if it can be defined in terms of a scoring vector $(s_1,\ldots,s_{m})$, with $x$ garnering an extra $s_k$ points whenever an agent in $C$ ranks it at position~$k$ in the partial profile $R{\restriction_C}$ at hand.\footnote{The positional scoring rules mentioned in Example~\ref{ex:fallacy} are voting rules defined in terms of such positional scoring functions.} 
And let us call such an $s$ \emph{normalised} if it can be obtained from a positional scoring function by dividing all scores by $|C|$. The normalised Borda scoring function is an example.

\begin{proposition}\label{prop:score-based-weak-pos-unanimity}
Every score-based DSF $\Delta^p_s$ induced by a normalised positional scoring function~$s$ satisfies Weak Position Unanimity.  
\end{proposition}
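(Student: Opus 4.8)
The plan is to show that, under a normalised positional scoring function, any proposal occupying a fixed position throughout the profile receives a divisiveness score of exactly zero, and then to observe that this forces it out of the \argmax\ unless every proposal also scores zero.

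First I would fix a profile $R : N \to {X!}$ together with a proposal~$x$ that occurs in the same position~$k$ throughout, so that $\pos^R_i(x)=k$ for every $i\in N$. The heart of the argument is to evaluate the score of $x$ on an arbitrary nonempty coalition $C\subseteq N$. Since $s$ is a normalised positional scoring function with scoring vector $(s_1,\ldots,s_m)$, I compute
\begin{eqnarray*}
s(R{\restriction_C},x) & = & \frac{1}{|C|}\sum_{i\in C} s_{\pos^R_i(x)} = \frac{1}{|C|}\sum_{i\in C} s_k = \frac{|C|}{|C|}\,s_k = s_k.
\end{eqnarray*}
The crucial point is that normalisation renders this value \emph{independent of the coalition}: every nonempty subelectorate assigns $x$ precisely the score~$s_k$. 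This is exactly where the hypothesis of normalisation does the work; without it one would obtain the unnormalised score $|C|\cdot s_k$, which depends on the size of $C$ and would in general produce a nonzero difference across the two sides of a decomposition.

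Next I would substitute this into the definition of $\mathit{div}_s$. For any decomposition $\{C,\complement{C}\}$ with both sides nonempty we get $\mathit{div}_s(R,x,C,\complement{C}) = |s_k-s_k| = 0$, while for decompositions with an empty side the quantity is $0$ by definition. Consequently, whatever the array of probability distributions $p$ happens to be, the divisiveness score assigned to $x$ is $\sum_{\{C,\complement{C}\}} p_{R,x}(\{C,\complement{C}\})\cdot 0 = 0$.

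Finally I would close the argument with an elementary \argmax\ observation. The divisiveness score of every proposal is a nonnegative combination of absolute values and hence is at least~$0$; thus $x$, scoring exactly~$0$, attains the minimum possible value. Either some other proposal scores strictly above~$0$, in which case $x$ is not among the maximisers and $x\notin\Delta^p_s(R)$, or every proposal scores~$0$, in which case the \argmax\ is all of~$X$ and $\Delta^p_s(R)=X$. Either disjunct is precisely what Weak Position Unanimity demands. I do not expect any genuine obstacle here: the only step requiring thought is recognising why normalisation is indispensable, after which the proof reduces to a one-line computation followed by a case distinction on whether any competing proposal carries positive score.
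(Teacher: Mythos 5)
Your proof is correct and follows essentially the same route as the paper's (sketched) proof: show that a normalised positional scoring function gives $x$ the coalition-independent score $s_k$, conclude $\mathit{div}_s(R,x,C,\complement{C})=0$ for every decomposition regardless of $p$, and hence that $x$ attains the minimum possible divisiveness value. Your explicit computation and the final case distinction (some proposal scores strictly above $0$, so $x\notin\Delta^p_s(R)$, versus all proposals score $0$, so $\Delta^p_s(R)=X$) merely fill in details the paper leaves implicit, including the correct emphasis on why normalisation is indispensable.
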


\begin{proof}[Proof (sketch)]
Note that the claim entails that this positive result does not depend on the probability distributions $p_{R,x}$ at all. 

If $x$ occurs at the same position throughout $R : N \to {X!}$, then for any normalised positional $s$ and any nonempty coalition $C \subsetneq N$ we get $s(R{\restriction_C},x) = s(R{\restriction_{\complement{C}}},x)$, and thus $\mathit{div}_s(R,x,C,\complement{C})=0$. So the divisiveness value we obtain for $x$ will be~$0$, meaning no other proposal can have a strictly lower divisiveness value.
\end{proof}

\noindent
\citet{ColleyEtAlIJCAI2023} provide an example for a non-positional scoring function~$s$ for which the divisiveness value of proposals in a fixed position is not~$0$.\footnote{The example given by \citet{ColleyEtAlIJCAI2023} involves a scoring function known as the \emph{asymmetric Copeland score} \citep{ZwickerHBCOMSOC2016}. We note that this example does not work if we use the \emph{symmetric Copeland score} \citep{ZwickerHBCOMSOC2016} instead. We also stress that a proposal getting a nonzero divisiveness value might still be the least divisive proposal, so such examples do not immediately demonstrate a failure of the normative principle at stake.} So the same argument would not go through.
Also, we cannot strengthen Weak Position Unanimity to Position Unanimity in Proposition~\ref{prop:score-based-weak-pos-unanimity}. The 2-agent profile with $a\pref_1 b\pref_1 c$ and $a\pref_2 c\pref_2 b$ provides a simple counterexample when $s$ is the \emph{Plurality scoring function} (giving 1 point to each top proposal and 0 to the rest  \citep{ZwickerHBCOMSOC2016}), as the two subprofiles ``look the same'' under~$s$.

Unfortunately, the guarantees of Proposition~\ref{prop:score-based-weak-pos-unanimity} do not extend to SCF-based DSFs, as the following example illustrates.

\begin{example}\label{ex:scf-pos-unanimity}
Consider this 3-agent profile of preferences:
\[\begin{array}{c}
a\pref_1 x\pref_1 y\pref_1 b\pref_1 c \\
b\pref_2 x\pref_2 y\pref_2 c\pref_2 a \\
c\pref_3 x\pref_3 y\pref_3 a\pref_3 b 
\end{array}\]
Observe that any DSF that declares $x$ as being ``uniquely most divisive'' would fail (Weak) Position Unanimity.

But now consider the SCF-based DSF $\Delta^p_F$, where $F$ is the Borda rule and $p$ is the array of uniform distributions. There are three partitions to consider. For partition $\{\{1\},\{2,3\}\}$, proposal $a$ wins for $\{1\}$  and $x$ for $\{2,3\}$ Also for the other two partitions, $x$ always wins for the 2-agent coalition and $b$ and $c$, respectively, win for the 1-agent coalition. So $x$ alone will be declared most divisive.
\end{example}

\paragraph{Reinforcement axioms.}
The classical axiom of \emph{Reinforcement} for voting rules says that when for two profiles $R$ and $R'$, reported by disjoint electorates, we select two sets of proposals with a nonempty intersection, then for profile $R\oplus R'$ we must select that very intersection. Reinforcement is a natural requirement for voting rules and it is central to the axiomatisation of the Borda rule and the positional scoring rules \citep{YoungJET1974,YoungSIAM1975}. But as the next example illustrates, in the context of measuring divisiveness it is normatively inadequate.

\begin{example}
Let $X=\{a,b,c,d\}$ and let $R$ be a profile where six agents report preferences, all ranking $a$ at the top and each reporting a different ranking on the remaining proposals~$\{b,c,d\}$. In profile~$R’$ another six agents all rank~$a$ at the bottom and also each report a different ranking for $\{b,c,d\}$. Then any intuitively adequate DSF will return $\{b,c,d\}$ for both profiles, as $a$ is clearly not divisive at all \emph{within the context of each profile}. But for the union $R\oplus R'$ of the two profiles, proposal~$a$ clearly is the most divisive proposal, as half of the agents rank it at the top and the other half at the bottom, so we would want our DSF to return $\{a\}$ rather than $\{b,c,d\}$.
\end{example}

\noindent
Next we define a very weak form of Reinforcement that, contrary to the classical axiom, has clear normative appeal when it comes to measuring divisiveness. We obtain it by restricting Reinforcement to the special case where one profile is perfectly uniform.\footnote{\citet{HardingSCW2022} introduces the same axiom under the name of \emph{Uniform Voter Addition Invariance} in her work on proxy selection in liquid democracy.}

\begin{axiom}
We say that a DSF $\Delta$ satisfies \textbf{Uniform Reinforcement} in case $\Delta(R) = \Delta(R\oplus R^U)$ for any profile $R : N \to {X!}$ and any perfectly uniform profile $R^U : N' \to {X!}$ with $N \cap N' = \emptyset$.
\end{axiom}

\noindent
In words, this axiom (essentially) says that if we add $m!$ preferences to a given profile~$R$, one for each possible way of ranking the $m$ proposals, then this should not affect our divisiveness judgments.\footnote{This way of paraphrasing the axiom covers the case where $R^U$ contains each possible preference exactly once. The full axiom also covers the case where $R^U$ includes $k$ copies of each possible preference.} 
To be clear, adding $R^U$ to a given profile~$R$ certainly will ``dilute'' any disagreements present in $R$, and we might, for instance, expect the \emph{absolute} degree of divisiveness of any given proposal to reduce. But the \emph{relative} ranking of proposals in terms of divisiveness (and thus the property of being or not being maximally divisive) clearly should not be affected by this kind of operation, affecting all proposals in the exact same way.
Uniform Reinforcement thus appears to be a very weak requirement that we would expect (and hope) any reasonable DSF will satisfy. But it turns out that the rank-variance DSF (of Definition~\ref{def:rankvar}) does not.

\begin{proposition}
The rank-variance DSF $\Delta_{\textit{var}}$ violates Uniform Reinforcement (for any $m\geq 3$).
\end{proposition}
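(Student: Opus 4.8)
The plan is to exhibit a single small profile $R$ together with a perfectly uniform profile $R^U$ for which $\Delta_{\textit{var}}(R) \neq \Delta_{\textit{var}}(R \oplus R^U)$, and to show that the construction scales to every $m \geq 3$. The conceptual reason such a counterexample must exist is that variance does not combine additively across two populations: merging $R$ with $R^U$ introduces a \emph{between-group} contribution that depends on how far each proposal's average position in $R$ lies from the global centre $\tfrac{m+1}{2}$. This extra term rewards a proposal that is only mildly divisive in $R$ but whose positions are systematically off-centre, letting it overtake the genuinely high-variance proposals once enough uniform ballots are added.

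First I would make this precise with the standard pooled-variance identity. Write $n = |N|$ for the size of $R$ and $u$ for the size of $R^U$, let $v_R(x) = \tfrac{1}{n}\sum_{i\in N}(\pos^R_i(x) - \mu^R(x))^2$ be the position-variance of $x$ in $R$, and recall that in any perfectly uniform profile every proposal has mean position $\tfrac{m+1}{2}$ and the same position-variance $\tfrac{m^2-1}{12}$. Then the position-variance of $x$ in $R \oplus R^U$ equals
\[
\frac{n\,v_R(x) + u\cdot\tfrac{m^2-1}{12}}{n+u} \;+\; \frac{nu}{(n+u)^2}\left(\mu^R(x) - \tfrac{m+1}{2}\right)^2 .
\]
Since the first summand differs across proposals only through $v_R(x)$, the set $\Delta_{\textit{var}}(R \oplus R^U)$ is the $\argmax$ over $x$ of $v_R(x) + \lambda\cdot(\mu^R(x) - \tfrac{m+1}{2})^2$, where $\lambda = \tfrac{u}{n+u} \in (0,1)$ can be pushed arbitrarily close to $1$ by taking more copies of $R^U$. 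By contrast, $\Delta_{\textit{var}}(R)$ is simply the $\argmax$ of $v_R(x)$.

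Next I would design $R$ so that these two $\argmax$ operators disagree. Take the two ballots placing $b$ on top, with $a$ and $c$ swapped between the second and third positions, and all remaining proposals $d_1,\ldots,d_{m-3}$ frozen in positions $4,\ldots,m$. In $R$ the proposals $a$ and $c$ each have position-variance $\tfrac14$ while $b$ and every $d_i$ have position-variance $0$, so $\Delta_{\textit{var}}(R) = \{a,c\}$. But $b$ sits at the extreme top, so its squared deviation from the centre is $\left(\tfrac{m-1}{2}\right)^2$, whereas for $a$ (and $c$) it is only $\left(\tfrac{m-4}{2}\right)^2$. A short computation then shows that $b$ overtakes $a$ precisely when $\lambda > \tfrac{1}{6m-15}$, a threshold that is at most $\tfrac13$ for all $m \geq 3$ and is therefore cleared already by a single copy of $R^U$ (for which $\lambda = \tfrac{m!}{m!+2} \geq \tfrac34$). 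Hence $a,c \notin \Delta_{\textit{var}}(R \oplus R^U)$ while $b \in \Delta_{\textit{var}}(R \oplus R^U)$, so the two selected sets differ.

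The main obstacle is the bookkeeping for the padding proposals $d_1,\ldots,d_{m-3}$: they must not conspire to restore the original winning set. The delicate case is the bottom proposal $d_{m-3}$, which sits at position $m$ and hence matches $b$'s squared deviation $\left(\tfrac{m-1}{2}\right)^2$ exactly, so after adding $R^U$ it ties with $b$ for maximal variance. This is harmless, however: whether the new winner set is $\{b\}$ (as when $m=3$ and there are no fillers) or $\{b,d_{m-3}\}$, it is in either case disjoint from $\{a,c\}$, which is all that is required; the interior fillers have strictly smaller deviations than $b$ and so cannot rescue $a$ or $c$. I would close by verifying the $m=3$ base case directly: with one copy of $R^U$ (eight agents total) one computes position-variances $\tfrac{39}{64}$ for each of $a$ and $c$ against $\tfrac{44}{64}$ for $b$, confirming that the maximiser flips from $\{a,c\}$ to $\{b\}$.
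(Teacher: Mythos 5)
Your proof is correct, and it follows the same basic strategy as the paper's: a tiny two-agent profile in which one proposal is frozen at an extreme position while a swapped pair generates all the within-profile variance, padded with a perfectly uniform profile so that the frozen extreme proposal overtakes the pair. (The paper works with $m=5$ and $R=(abcde,badce)$, freezing $e$ at the bottom; you freeze $b$ at the top --- immaterial by symmetry.) Where you genuinely depart from the paper is in the level of rigour of the variance comparison. The paper's argument is explicitly a sketch: it asserts that the uniform ballots push every mean ``close to~$3$'' and invites the reader to see, ``without going through the tedious calculations,'' that two occurrences at position~$5$ then yield maximal variance, and it merely asserts that the construction generalises to all $m\geq 3$. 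You instead derive the exact pooled-variance identity, reducing $\Delta_{\textit{var}}(R\oplus R^U)$ to the $\argmax$ of $v_R(x)+\lambda\bigl(\mu^R(x)-\tfrac{m+1}{2}\bigr)^2$ with $\lambda=\tfrac{u}{n+u}$, which isolates precisely the between-group term responsible for the failure, gives the explicit threshold $\lambda>\tfrac{1}{6m-15}$ (cleared by a single copy of $R^U$, since $\lambda=\tfrac{m!}{m!+2}\geq\tfrac34$), and handles all $m\geq 3$ uniformly, including the tie between $b$ and the bottom filler $d_{m-3}$, which you correctly observe is harmless because disjointness from $\{a,c\}$ is all that is needed. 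I verified your base-case numbers ($\tfrac{39}{64}$ versus $\tfrac{44}{64}$ at $m=3$); they check out. In short: the paper's sketch is shorter, but your decomposition makes the mechanism quantitative and turns the paper's ``similar construction works for all $m$'' into an actual proof --- a worthwhile strengthening.
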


\begin{proof}[Proof (sketch)]
We are done if we can construct a counterexample for any given $m\geq 3$. We do so for $m=5$, as it is especially instructive, but the same construction is possible for any $m\geq 3$.

Let $X = \{a,b,c,d,e\}$ and consider the two-agent profile $R=(abcde,badce)$. The rank-variance will (correctly) assess $e$ as being minimally divisive (with variance~$0$) and return all other proposals as maximally divisive: $\Delta_{\textit{var}}(R) = \{a,b,c,d\}$. In particular, both $\{a,b\}$ and $\{c,d\}$ are treated symmetrically, with the former occurring at positions~$1$ and~$2$ (with $\mu^R(a)=\mu^R(b)=1.5$), and the latter occurring at positions~$3$ and~$4$ (with $\mu^R(c)=\mu^R(d)=3.5$).

Let $R^U$ be the perfectly uniform $120$-agent profile where each possible ranking of the five proposals occurs exactly once. Now consider the union $R\oplus R^U$. In this profile the positions of $a$ and $b$ have higher variance than those of $c$ and $d$, because the mean position of each one of the five proposals will now be close to~$3$. Worse, $e$ now has maximal variance. (To see this, also without going through the tedious calculations, observe that ranking a proposal twice in position~$5$ is the most extreme thing you can do against the backdrop of the $120$ preferences pushing the mean close to~$3$.) Thus, $\Delta_{\textit{var}}(R\oplus R^U) = \{e\}$, in violation of Uniform Reinforcement.
\end{proof}

\noindent
This result provides a clear axiomatic justification for rejecting rank variance as a suitable measure of divisiveness.\footnote{Amartya Sen explains why variance is not a good tool to measure income inequality in somewhat similar terms~\citep[p.~27]{Sen1997}.}
For the DSF families defined in Section~\ref{sec:classes}, whether a given DSF satisfies Uniform Reinforcement will depend on the specific parameters used to define it. For some, such as $\Delta_\delta$ with $\delta$ measuring average Kendall tau distance between pairs of preferences, it is easy to see that they do.

\paragraph{Accounting for clones.}
Two proposals $x$ and $y$ are called \emph{clones} in the context of profile~$R$ if they appear adjacent to one another in each of the individual rankings in~$R$. This concept was first introduced into social choice theory to model normative requirements regarding the protection of voting rules against ``vote splitting'' between similar candidates \citep{TidemanSCW1987}. In the context of modelling divisiveness, we would want proposals that are clones of one another to be treated ``similarly'' (in a sense yet to be defined). Treating them exactly the same, so either declaring both or neither of them most divisive, however, seems too demanding. Indeed, one of the clones might be a ``slightly more extreme'' variant of the other. Instead, to account for the possibility of such exeptions while still encoding the basic principle of treating similarly ranked proposals similarly in terms of divisiveness, one might want to postulate that no other (non-clone) proposal can be ranked between the two in terms of divisiveness. 

\begin{axiom}\label{ax:cc}
We say that a DSF $\Delta$ satisfies \textbf{Clone Consistency} in case $\{x,y\}\subseteq\Delta(R)$ entails $\{x,x',y\}\subseteq\Delta(R)$ for any profile $R$ in which proposals $x$ and $x'$ are clones but proposals $x$ and $y$ are not.
\end{axiom}

\noindent
Thus, if $\Delta$ declares both $x$ and $y$ most divisive (and if they are not clones of one another), then $\Delta$ must also declare all of their clones as most divisive (including in particular $x$'s clone~$x'$).

We shall soon see that finding a DSF that satisfies Clone Consistency is not easy, but there nonetheless is a large class of DSFs that does.
Let us call a profile index function $\delta : \Profs \to \mathbb{R}$ \emph{neutral} if $\delta(R) = \delta(\sigma(R))$ for any permutation $\sigma : X \to X$, i.e., if it is invariant under swapping proposals within a profile~\citep{AlcaldeUnzuVorsatzSCW2013,HashemiEndrissECAI2014}.

\begin{proposition}\label{prop:delta-clone}
Every profile-index-based DSF $\Delta_\delta$ induced by a neutral profile index function~$\delta$ satisfies Clone Consistency.
\end{proposition}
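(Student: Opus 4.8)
The plan is to prove something slightly stronger than what Clone Consistency demands: that a neutral $\delta$ always assigns the two clones $x$ and $x'$ the \emph{exact same} divisiveness value, i.e., $\delta(R^{\hat{x}}) = \delta(R^{\hat{x'}})$. Once this is in hand, the axiom follows immediately and without even using the hypotheses about $y$ (nor the assumption that $x$ and $y$ are not clones): if $x\in\Delta_\delta(R)$, then $\delta(R^{\hat{x}})$ is the minimal value of $\delta(R^{\hat{z}})$ over all $z\in X$, so $x'$ attains that same minimum and therefore also lies in $\Delta_\delta(R) = \argmin_{z\in X}\delta(R^{\hat{z}})$.

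The heart of the argument is a single structural identity. Let $\sigma : X \to X$ be the transposition that swaps $x$ and $x'$ and fixes every other proposal. I claim that $\sigma(R^{\hat{x}}) = R^{\hat{x'}}$. To see this, fix an agent~$i$ and inspect her ranking $R_i$. Because $x$ and $x'$ are clones, they occupy two consecutive positions in~$R_i$; think of them as jointly filling a single ``slot'' located between the proposal immediately above the pair and the proposal immediately below it. Moving $x$ to the top to form $R_i^{\hat{x}}$ removes $x$ and leaves $x'$ sitting alone in that slot, with the relative order of all remaining proposals untouched; symmetrically, forming $R_i^{\hat{x'}}$ leaves $x$ alone in exactly the same slot. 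Hence $R_i^{\hat{x}}$ and $R_i^{\hat{x'}}$ are identical except that the roles of $x$ and $x'$ are interchanged, which is precisely the effect of applying $\sigma$ to $R_i^{\hat{x}}$.

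With the identity established agentwise, and hence for the whole profile, neutrality of $\delta$ finishes the proof in one line: $\delta(R^{\hat{x'}}) = \delta(\sigma(R^{\hat{x}})) = \delta(R^{\hat{x}})$. This equality of divisiveness scores for clones yields Clone Consistency as noted in the first paragraph, and is visibly independent of the probability-style parameters (there are none here) and of the third proposal~$y$.

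The only genuinely delicate part is verifying $\sigma(R^{\hat{x}}) = R^{\hat{x'}}$ cleanly for every agent and every relative order of the clones. I would discharge it with a short case distinction according to whether $x$ sits directly above or directly below $x'$ in $R_i$, together with the boundary cases where the clone pair occupies the top or the bottom of the ranking. In each case the key point is the same: no other proposal ever separates $x$ from $x'$, so moving either clone to the top produces the identical picture up to the label swap effected by $\sigma$. No calculation is needed, and I expect this case check to be the main (though entirely routine) obstacle.
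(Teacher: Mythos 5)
Your proof is correct and follows essentially the same route as the paper's: both establish that swapping the two clones maps $R^{\hat{x}}$ to $R^{\hat{x'}}$, so neutrality of $\delta$ forces $\delta(R^{\hat{x}})=\delta(R^{\hat{x'}})$, yielding the ``both or neither'' conclusion that the paper itself notes is a strong form of Clone Consistency. Your additional agentwise case-check of the identity $\sigma(R^{\hat{x}})=R^{\hat{x'}})$ is just a more explicit verification of the step the paper states in one line, and it correctly pinpoints adjacency of the clones as the place where the clone hypothesis is used.
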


\begin{proof}
Consider a profile $R$ with clones $x$ and $y$. Then we can obtain $R^{\hat{x}}$ from $R^{\hat{y}}$ by swapping $x$ and $y$, meaning that we must have $\delta(R^{\hat{x}}) = \delta(R^{\hat{y}})$ for any neutral~$\delta$. Hence, the DSF $\Delta_\delta$ induced by such a $\delta$, when applied to $R$, must return either both or neither of $x$ and $y$. So $\Delta_\delta$ satisfies (a strong form of) Clone Consistency.
\end{proof}

\noindent
So, while it is certainly debatable whether our specific rendering of the basic normative principle of treating similar objects similarly in the form of Axiom~\ref{ax:cc} is the most appropriate way of doing so, Proposition~\ref{prop:delta-clone} shows that it is satisfied by a large class of natural DSFs, which suggests that it is not an unreasonable property to demand.

\paragraph{Popularity and divisiveness.}
Finally, we propose two axioms that speak to two opposing interpretations of ``divisiveness'', depending on whether the popularity of the proposals in question should matter when assessing their divisiveness. One view is that divisiveness should be independent from popularity. A consequence of this view is that when we \emph{invert} all the preference rankings in a given profile (e.g., changing $a\pref b\pref c$ to $c\pref b\pref a$), then that should not affect our divisiveness judgments for that profile.\footnote{\citet{DelemazureEtAlIJCAI2024} refer to what we call \emph{Inversion Invariance} as \emph{Reverse Stability}.} 

\begin{axiom}
We say that a DSF $\Delta$ satisfies \textbf{Inversion Invariance} in case $\Delta(R)=\Delta(R')$ for any profile~$R$ and its inversion~$R'$.
\end{axiom}

\noindent
The other view is that the potential divisiveness of clearly unpopular proposals is not important, so a DSF should not return such proposals. This view is reasonable, for instance, if our ultimate goal is to choose a single proposal and we use a DSF to select a contender that requires further debate. A clear-cut example for a proposal that is unpopular is one that is Pareto-dominated by another proposal.

\begin{axiom}
We say that a DSF $\Delta$ satisfies \textbf{Pareto Efficiency} in case $x\notin\Delta(R)$ for any profile~$R$ and any Pareto-dominated proposal~$x$ in~$R$.
\end{axiom}

\noindent
Clearly, Inversion Invariance and Pareto Efficiency are in direct conflict with one another and no DSF can possibly satisfy both axioms.\footnote{To see this, consider the 1-agent profile $(a\pref b)$, where Inversion Invariance requires any DSF to return $\{a,b\}$, while Pareto Efficiency requires any DSF to return~$\{a\}$.} (This is not a negative result, but an intentional strategy to partition the space of possible definitions for a DSF.)

Let us state some simple results regarding these two axioms (proofs are omitted but immediate). 
First, any profile-index-based DSF $\Delta_\delta$ satisfies Inversion Invariance if $\delta$ satisfies the corresponding property.
Second, any SCF-based DSF \smash{$\Delta^p_F$} induced by a SCF~$F$ that is Pareto-efficient can violate Pareto Efficiency only in cases where it returns the full set~$X$.
Finally, for score-based DSFs \smash{$\Delta^p_s$} the situation depends on the scoring function~$s$. For instance, for Plurality scoring, we satisfy the same variant of Pareto Efficiency. 
But for (classical) Borda scoring, we satisfy Inversion Invariance (due to the Borda scoring vector's inherent symmetry).

\section{Impossibility Results}\label{sec:impossibilities}

In this section we present three impossibility results demonstrating that satisfying certain combinations of normatively desirable properties of a DSF turns out to be mathematically impossible. All three impossibilities involve the axiom of Position Unanimity, which due to its intuitive appeal and mathematical simplicity arguably is a central requirement for measuring divisiveness. These thus are significant negative results that highlight the difficulties of identifying an adequate measure of divisiveness. 

\paragraph{Pareto Efficiency.}
Our first impossibility result concerns a straightforward incompatibility of Position Unanimity with Pareto Efficiency, which applies even in the case of Weak Position Unanimity.

\begin{theorem}\label{thm:pareto}
No DSF can simultaneously satisfy Pareto Efficiency and Weak Position Unanimity (in case $m\geq 3$).
\end{theorem}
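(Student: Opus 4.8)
The plan is to exhibit a single profile on which Pareto Efficiency and Weak Position Unanimity issue contradictory verdicts. The idea is to engineer a profile in which one proposal is simultaneously (i)~in a fixed position throughout---so that Weak Position Unanimity constrains it---and (ii)~the unique proposal that is not Pareto-dominated---so that Pareto Efficiency pins down the output completely.

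Concretely, I would take any profile $R : N \to {X!}$ in which every agent ranks one distinguished proposal $a$ at the very top, while (to keep the profile from being merely unanimous, and to exploit the hypothesis $m \geq 3$) the agents order the remaining $m-1 \geq 2$ proposals differently. Since $a$ sits in position~$1$ for every agent, $a$ occurs in the same position throughout, so Weak Position Unanimity applies to it. At the same time, every proposal other than $a$ is Pareto-dominated by $a$ (everyone ranks $a$ above it), so Pareto Efficiency forces all of them out of the selection, whereas $a$ itself is the unique Pareto-undominated proposal.

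From here the contradiction is immediate. By Pareto Efficiency, $\Delta(R) \subseteq \{a\}$, and since a DSF always returns a nonempty set, $\Delta(R) = \{a\}$. Now apply Weak Position Unanimity to the fixed-position proposal $a$: it requires $a \notin \Delta(R)$ or $\Delta(R) = X$. The first disjunct fails because $a \in \Delta(R)$, and the second fails because $\Delta(R) = \{a\} \neq X$ whenever $m \geq 3$. Hence no DSF can satisfy both axioms.

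The argument carries essentially no technical obstacle; the only thing requiring care is the design of the profile. The crux is to arrange for the fixed-position proposal to coincide with the unique Pareto-undominated proposal, so that Pareto Efficiency is compelled to return exactly that proposal as a singleton---precisely the output shape that Weak Position Unanimity forbids for a fixed-position proposal, short of returning all of~$X$. Placing $a$ at the top of every ranking makes it play both roles at once, and the requirement $m \geq 3$ simply guarantees both that a non-unanimous such profile exists and that $\{a\} \neq X$.
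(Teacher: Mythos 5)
Your proof is correct and follows essentially the same approach as the paper: both construct a non-unanimous profile in which a single proposal $a$ is ranked first by every agent (so $a$ is fixed-position while Pareto-dominating everything else), forcing $\Delta(R)=\{a\}$ by Pareto Efficiency and contradicting Weak Position Unanimity since $\{a\}\neq X$ for $m\geq 3$. The paper merely instantiates your general family with the concrete profile using the two rankings $a\pref b_1\pref\cdots\pref b_{m-1}$ and $a\pref b_{m-1}\pref\cdots\pref b_1$.
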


\begin{proof}
Consider a profile $R : N \to {X!}$ in which agents report preferences over a set $X = \{a,b_1,\ldots,b_{m-1}\}$ of $m\geq 3$ proposals. Suppose each of the following two rankings occurs at least once in~$R$ (and no other ranking shows up in the profile): 
\[\begin{array}{c}
a\pref b_1\pref \cdots \pref b_{m-1} \\
a\pref b_{m-1}\pref \cdots \pref b_1 
\end{array}\]
Any DSF that satisfied Pareto Efficiency must return $a$ as the (only) maximally divisive proposal.
On the one hand, Weak Position Unanimity requires that $a$ is not declared maximally divisive, so there indeed exists no DSF that satisfies both of our requirements.
\end{proof}

\noindent
Observe that, if we were to weaken Pareto Efficiency to permit the DSF to return the full set~$X$ even when there are dominated proposals, we still get an impossibility in case we at the same time strengthen Weak Position Unanimity to Position Unanimity. As we saw that all reasonable SCF-based DSFs satisfy this weak variant of Pareto Efficiency, we can infer that no such DSF can satisfy Position Unanimity (we got a taste of this already in Example~\ref{ex:scf-pos-unanimity}).

We stress again that Pareto Efficiency is an attractive axiom for divisiveness only in very specific contexts, where we want to identify proposals that are both divisive and popular. So Theorem~\ref{thm:pareto} might be interpreted as saying that the context-specific axiom of Pareto Efficiency is incompatible with the generalist axiom of Position Unanimity. A possible way out might be to develop new context-specific axioms in the spirit of Position Unanimity that only affect proposals that are not Pareto-dominated.

\paragraph{Profile-index-based measures.}
If we want to satisfy Position Unanimity, we must abandon Pareto Efficiency as a requirement. A large and potentially attractive class of DSFs that violate Pareto Efficiency (because they satisfy Inversion Invariance) are the DSFs defined in terms of a profile index function $\delta$ (see Definition~\ref{def:deltaDSF}). 
Recall that $\delta$ might be a measure of polarisation or diversity~\citep{AlcaldeUnzuVorsatzSCW2013,CanEtAlMSS2015,HashemiEndrissECAI2014,FaliszewskiEtAlIJCAI2023}. 
Unfortunately, we obtain an impossibility also for this class.

\begin{theorem}\label{thm:delta-impossible}
No profile-index-based DSF $\Delta_\delta$ induced by a neutral profile index function~$\delta$ satisfies Position Unanimity (in case $m\geq 3$).
\end{theorem}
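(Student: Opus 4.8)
The plan is to assume, for contradiction, that some neutral $\delta$ induces a $\Delta_\delta$ satisfying Position Unanimity, and to exploit how neutrality interacts with the top-pushing operation $R\mapsto R^{\hat{x}}$. The engine is the identity $\sigma(R^{\hat{z}}) = (\sigma R)^{\widehat{\sigma(z)}}$, valid for every permutation $\sigma:X\to X$ and proposal $z$ (pushing $z$ to the top and then relabelling equals relabelling and then pushing $\sigma(z)$ to the top). Combined with neutrality of $\delta$ this gives $\delta(R^{\hat{z}}) = \delta\big((\sigma R)^{\widehat{\sigma(z)}}\big)$. The key consequence I would isolate is: if a profile $R$ is such that the profiles $\{\,R^{\hat{z}} : z\in X\,\}$ all lie in a single orbit under proposal permutations, then $\delta(R^{\hat{z}})$ does not depend on $z$, and hence $\Delta_\delta(R)=\argmin_{x} \delta(R^{\hat{x}}) = X$.

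The second step is to exhibit a non-unanimous profile with a fixed-position proposal for which this orbit condition holds, thereby forcing that proposal into $\Delta_\delta(R)$ in violation of Position Unanimity. The clean instance is a two-voter reversal profile $R=(\rho,\bar\rho)$, where $\bar\rho$ reverses $\rho$ (for $m=3$, take $R=(b\,a\,c,\; c\,a\,b)$); here the central proposal $x$ sits in the same position in both rankings. For every $z$, pushing $z$ to the top yields $R^{\hat{z}}=(z\,\tau_z,\; z\,\bar\tau_z)$, i.e.\ $z$ on top followed by a two-voter reversal profile on $X\setminus\{z\}$, since deleting the same $z$ from $\rho$ and $\bar\rho$ leaves orders that are reverses of one another. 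Given $z,z'$, define $\sigma$ by $\sigma(z)=z'$ together with the order-isomorphism sending $\tau_z$ to $\tau_{z'}$; then $\sigma(z\,\tau_z)=z'\,\tau_{z'}$, and crucially $\sigma$ also maps the reversed list $z\,\bar\tau_z$ to $z'\,\bar\tau_{z'}$ automatically, because relabelling a reversed sequence gives the reversal of the relabelled sequence. Hence $\sigma(R^{\hat{z}})=R^{\hat{z'}}$ as ordered profiles, so by the engine $\delta(R^{\hat{z}})$ is constant and $\Delta_\delta(R)=X$. Since $R$ is non-unanimous and $x$ occupies a fixed position, Position Unanimity demands $x\notin\Delta_\delta(R)$---the desired contradiction. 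I would stress that this argument invokes neutrality only, matching the hypothesis.

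The main obstacle is upgrading this from the representative case to every $m\geq 3$, and in particular to even $m$. The reversal profile has a fixed-position proposal exactly when $m$ is odd (the position-reversal has a fixed point only then), so the two-voter construction settles all odd $m$ directly. For even $m$ a reversal pair fixes no proposal, and more generally any profile symmetric enough to place all $R^{\hat{z}}$ in one neutrality orbit tends to fix no proposal at all---this tension is precisely what must be overcome. I would therefore need a more elaborate profile, using additional voters arranged so that some proposal stays in a fixed position while the family $\{\,R^{\hat{z}}\,\}$ still collapses to a single neutrality class (for instance by replacing the position-reversal with a position involution having two fixed points, or---if $\delta$ is additionally assumed anonymous, as standard diversity and polarisation indices are---by averaging over a transitive group action on the non-fixed proposals). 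Carefully constructing such a profile and verifying the orbit condition is the technical crux; everything else reduces to the engine of the first paragraph.
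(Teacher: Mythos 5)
Your argument for odd $m$ is, in essence, exactly the paper's own proof, carried out more carefully: the paper uses the same balanced reversal profile, observes that the middle proposal $a_{\lceil\nicefrac{m}{2}\rceil}$ occupies a fixed position, and argues that ``whichever proposal we move to the top, we end up with the same kind of new profile (modulo renaming)'', so neutrality forces $\Delta_\delta(R)=X$, contradicting Position Unanimity. Your engine identity $\sigma(R^{\hat{z}})=(\sigma R)^{\widehat{\sigma(z)}}$ and the explicit order-isomorphism $\sigma$ sending $z\,\tau_z$ to $z'\,\tau_{z'}$ (which automatically transports the reversed second ranking) make rigorous what the paper states only in words; a further merit of your version is that your isomorphisms fix voter identities, so you genuinely use only neutrality of $\delta$, exactly matching the hypothesis of the theorem.

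The genuine gap is the even case, which the statement covers ($m\geq 3$ includes $m=4,6,\dots$) and which you explicitly leave unresolved as ``the technical crux''. To be fair, the paper's own proof relegates this case to the parenthetical claim that it ``is similar''---but your tension observation shows the transfer is \emph{not} mechanical: the two-voter reversal fixes a position only for odd $m$, and for two voters the requirement that all top-pushed profiles lie in a single neutrality orbit essentially forces the relative-order pattern to be the identity (a unanimous profile, exempted by the axiom) or the full reversal (no fixed position when $m$ is even), so any even-$m$ witness needs more voters or a structurally different argument, and naive attempts (e.g., padding the odd construction with a common bottom proposal, or reversing only the non-fixed proposals) demonstrably break the orbit condition. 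Your fallback of additionally assuming anonymity of $\delta$ would prove a weaker statement than the one claimed. Within the paper, the natural repair is the route it itself points out at the end of Section~\ref{sec:impossibilities}: derive Theorem~\ref{thm:delta-impossible} from Proposition~\ref{prop:delta-clone} together with Theorem~\ref{thm:clone-impossible}, whose proof profile $(abc,cba)$, padded with proposals fixed at the bottom, works for every $m\geq 3$---though note that this route, too, quietly invokes anonymity-style symmetry of $\Delta_\delta$ and not neutrality of $\delta$ alone. In short: your proof is complete, correct, and faithful to the paper for all odd $m$; for even $m$ you owe a construction, and your honest diagnosis of why it is hard is itself evidence that the paper's ``similar'' conceals real work.
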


\noindent
We shall soon see that Theorem~\ref{thm:delta-impossible} can be obtained as a corollary of two other results. But a direct proof might provide extra insight.

\begin{proof}[Proof of Theorem~\ref{thm:delta-impossible}]
Consider scenarios with an odd number $m\geq 3$ of proposals, $a_1$ to $a_m$, where the same number of agents each report one of the following two rankings (the proof for even numbers $m$ of proposals is similar):
\[\begin{array}{c}
a_1\pref \cdots \pref a_{\lceil\nicefrac{m}{2}\rceil} \pref \cdots \pref a_m \\
a_m\pref \cdots \pref a_{\lceil\nicefrac{m}{2}\rceil} \pref \cdots \pref a_1 
\end{array}\]
Position Unanimity requires that $a_{\lceil\nicefrac{m}{2}\rceil}$, the only proposal occurring in the same position throughout, is not returned by the DSF.

On the other hand, whichever proposal $a_j$ we move to the top, we end up with the same kind of new profile (modulo renaming of proposals): there is one proposal at the top everywhere (the one we moved), and the rest of the profile is ``perfectly polarised''. So any DSF $\Delta_\delta$ with a neutral index function~$\delta$ must treat all proposals the same. But this is in conflict with the requirement, imposed by Position Unanimity, of giving special treatment to $a_{\lceil\nicefrac{m}{2}\rceil}$.
\end{proof}

\noindent
At the conceptual level, a tentative explanation for this impossibility, and more specifically the problem observed in the proof, might be that even though $a_{\lceil\nicefrac{m}{2}\rceil}$ intuitively is less divisive than the other proposals, \emph{it still contributes to} diversity (or polarisation or disagreement) within the profile, by pushing other proposals further apart. 
The profile-index-based approach to measuring divisiveness is not able to distinguish between these two distinct concepts: being a divisive proposal and making other proposals look divisive. This also illustrates again the fact, now at a technical level, that, despite the obvious connections, the concept of divisiveness (of a single proposal) is really fundamentally different from the concepts of polarisation and diversity (of a profile of preferences over proposals) previously studied in the literature.

In principle, we can circumvent the impossibility of Theorem~\ref{thm:delta-impossible} by replacing Position Unanimity with Weak Position Unanimity. But this does not help in identifying an appealing DSF. The only obvious example that comes to mind is the DSF induced by the \emph{constant} profile index function, mapping any given profile to the same fixed number, which of course is of no practical interest.

\paragraph{Clone Consistency.}
Our final result shows that Clone Consistency and Position Unanimity represent incompatible requirements, at least in the presence of weak and natural symmetry requirements. 

\begin{theorem}\label{thm:clone-impossible}
No DSF can simultaneously satisfy Anonymity, Neutrality, Clone Consistency, and Position Unanimity (in case $m\geq 3$). 
\end{theorem}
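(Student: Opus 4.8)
The plan is to exhibit, for each $m \geq 3$, a single profile on which the four axioms collide. I would build the profile around a designated ``pivot'' proposal $w$ that is pinned to one fixed position across the entire profile, so that Position Unanimity forces $w \notin \Delta(R)$, while at the same time $w$ is a clone of a proposal that the symmetry of $R$ forces \emph{into} $\Delta(R)$. Clone Consistency would then drag $w$ back into the output, contradicting Position Unanimity. Concretely, take $X = \{p, w, q, d_1, \ldots, d_{m-3}\}$ and let equal numbers of agents report the two rankings $p \succ w \succ q \succ d_1 \succ \cdots \succ d_{m-3}$ and $q \succ w \succ p \succ d_1 \succ \cdots \succ d_{m-3}$, with everything below the first three proposals held fixed.

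First I would neutralise as many proposals as possible via Position Unanimity. In this profile $w$ occupies position $2$ throughout and each $d_i$ occupies position $3+i$ throughout, while the profile is plainly not unanimous (two distinct rankings appear). Hence Position Unanimity rules out $w$ and every $d_i$, leaving $\Delta(R) \subseteq \{p, q\}$.

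Next I would pin the output down exactly using the two symmetry axioms. The proposal-permutation $\sigma$ that swaps $p$ and $q$ while fixing all other proposals maps $R$ to a profile differing from $R$ only by interchanging the two reported rankings, i.e.\ only by a relabelling of agents. Combining Neutrality with Anonymity, $\Delta(R)$ must therefore be invariant under the swap of $p$ and $q$. Since $\Delta(R)$ is a nonempty subset of $\{p,q\}$ closed under this swap, it must equal $\{p, q\}$.

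Finally I would invoke Clone Consistency. A direct inspection of the two rankings shows $w$ is adjacent to $p$ in both (so $w$ and $p$ are clones), whereas $p$ and $q$ are non-adjacent in at least one ranking (so they are \emph{not} clones). Instantiating the axiom's roles as $p, w, q$, the membership $\{p,q\} \subseteq \Delta(R)$ forces $\{p, w, q\} \subseteq \Delta(R)$, hence $w \in \Delta(R)$, contradicting what Position Unanimity gave us. I expect the main obstacle to be the third step: getting Neutrality and Anonymity to conspire so that the output is pinned to \emph{exactly} $\{p,q\}$ rather than merely to some symmetric superset, which requires verifying that the $p \leftrightarrow q$ swap really is realisable as a relabelling of agents on this profile and that the filler proposals $d_i$ are all genuinely killed off by Position Unanimity (which in turn needs the profile to be non-unanimous). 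The clone bookkeeping is routine but must be checked with care, since Clone Consistency only bites when the two output proposals it is applied to are non-clones.
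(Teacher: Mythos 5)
Your proposal is correct and follows essentially the same route as the paper's own proof: the paper uses the profile $R=(abc,cba)$, applies Position Unanimity to exclude the fixed middle proposal, Anonymity plus Neutrality to force $\Delta(R)=\{a,c\}$, and then derives the contradiction via Clone Consistency (with $a$ and $b$ clones, $a$ and $c$ not), exactly matching your steps with $(p,w,q)$ in place of $(a,b,c)$. The only difference is that you explicitly pad the profile with fixed-position proposals $d_1,\ldots,d_{m-3}$ to handle general $m\geq 3$, a detail the paper's proof leaves implicit.
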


\begin{proof}
Suppose we were to attempt to design a DSF $\Delta$ that has the required properties.
Consider the profile $R = (abc,cba)$, which has $b$ occurring in the same middle position throughout.
By Anonymity and Neutrality, either both or neither of $a$ and $c$ must be part of $\Delta(R)$.
Furthermore, as $R$ is not a unanimous profile, due to Position Unanimity, we must have $b \notin \Delta(R)$. Thus, as $\Delta(R)$ cannot be empty, we must have $\Delta(R)=\{a,c\}$.
But this choice is in direct conflict with Clone Consistency, as $a$ and $c$ are not clones, yet $a$ and $b$ are---so once $a$ and $c$ are declared maximally divisive, $b$ would have to be declared maximally divisive as well.
\end{proof}

\noindent
This is both a negative and a surprising result, given that each of the axioms involved not only is of some normative appeal but also appears to have fairly moderate logical strength when considered in isolation. We note that, 
also here, the impossibility does not persist if we replace Position Unanimity by Weak Position Unanimity, as in that case the trivial DSF always returning the full set~$X$ would constitute a counterexample---but this of course does not provide us with a helpful way in in which to measure divisiveness.
A more promising route might be to weaken Clone Consistency. Indeed, while it is hard to disagree with our postulate that ``similar proposals should be treated similarly'', one possible interpretation of Theorem~\ref{thm:clone-impossible} is that our formalisation of that postulate is too narrow (even if it is sufficiently weak to admit the positive result of Proposition~\ref{prop:delta-clone}).

We conclude this section by observing that, in view of Proposition~\ref{prop:delta-clone}, we also can prove Theorem~\ref{thm:delta-impossible} as a corollary of Theorem~\ref{thm:clone-impossible}.

\section{Conclusion}\label{sec:conclusion}

We argued that being able to make formally sound judgments on the divisiveness of proposals for which we have collected preferences has important applications in digital democracy.
We therefore introduced a formal model of divisiveness, proposed three general approaches for retrieving maximally divisive proposals, and formulated axioms encoding relevant normative requirements. Our technical results highlight some of the fundamental difficulties associated with settling on a satisfactory definition of divisiveness. But we stress that, just as for classical impossibility results in social choice theory, these limitative results should not be interpreted as indicating that all attempts to use measures of divisiveness to analyse preference data are bound to fail, but rather that much more research on this broad topic is needed. 

Investigating a fourth approach to defining a DSF, only briefly mentioned earlier, namely to base its definition on a suitable probabilisitic SCF \citep{BrandtTRENDS2017} bears some promise (without, of course, affecting those of our impossibility results that do not only concern a specific class of DSFs). Such an approach might take care of concerns about score-based definitions not being grounded in axioms, while allowing for more fine-grained judgments than the basic SCF-based approach.

As our interest here has been in normative questions, we did not discuss algorithmic matters at all (which should in any case come \emph{after} one has identified a normatively appealing solution). But it is clear that most interesting measures of divisiveness will be computationally demanding. For score-based and SCF-based measures much of the complexity will come from the need to cycle through all decompositions of the electorate. A pragmatic solution to this problem might be to \emph{sample} from the distribution over decompositions rather than to perform an exact computation.

Similarly, we did not discuss empirical questions---which were explored in depth by \citet{NavarreteEtAlNHB2024}. Instead, our purpose here has been to fill a gap in that work by developing axiomatic foundations for the preference-based measurement of divisiveness. We hope this will lay the foundations for proposals of new candidate definitions for measures of divisiveness that perform adequately from an axiomatic point of view, at which point they also should be tested empirically on a wide range of preference data.

Our focus has been on preferences conforming to the classical model of social choice, where everyone reports a strict ranking of all proposals. Due to its simplicity and pervasiveness in the literature, this is the right model to start with. But divisiveness should also be studied for other models, such as weak preferences \citep{ArrowEtAlHBSCW2002}, approval preferences \citep{LaslierSanverHBAV2010}, combinations of approval and ranking \citep{BramsSanver2009}, Likert scales, linguistic grades \citep{BalinskiLarakiPNAS2007}, and incomplete preferences~\citep{KonczakLangMPREF2005,TerzopoulouPhD2021}.


\paragraph{Acknowledgements.}
I would like to thank Th\'eo Delemazure, Umberto Grandi, C\'esar Hidalgo, audience members at the COMSOC Seminar at the University of Amsterdam and the Seminar in Microeconomics at the University of Lausanne, as well as several anonymous reviewers for helpful feedback.
Funded by the European Union. Views and opinions expressed are however those of the author only and do not necessarily reflect those of the European Union or the European Research Council Executive Agency. Neither the European Union nor the granting authority can be held responsible for them. This work is supported by ERC Synergy Grant ADDI (\href{https://doi.org/10.3030/101166894}{doi.org/10.3030/101166894}).


\bibliography{div}

\end{document}